\documentclass[10pt,journal,comsoc]{IEEEtran}
%
\IEEEoverridecommandlockouts

\usepackage{amssymb}
\usepackage{bbm}
\usepackage{amsmath}
\usepackage{caption}
\usepackage{cite}
\usepackage{mathrsfs}
\usepackage[displaymath,mathlines]{lineno}
\usepackage{color}
\usepackage{overpic}
\usepackage{multirow}
\usepackage{algorithm,algpseudocode}
\usepackage{algorithmicx}
\usepackage{pbox}
\usepackage{multicol}
\usepackage{diagbox}
\usepackage{amsthm}
\usepackage{epstopdf}

\makeatletter
\newcommand{\doublewidetilde}[1]{{%
		\mathpalette\double@widetilde{#1}%
}}
\newcommand{\double@widetilde}[2]{%
	\sbox\z@{$\m@th#1\widetilde{#2}$}%
	\ht\z@=.5\ht\z@
	\widetilde{\box\z@}%
}
\makeatother

\newtheorem{theorem}{Theorem}

\newtheorem{corollary}{Corollary}

\newtheorem{example}{Example}
\newtheorem{remark}{Remark}



\setcounter{page}{1}

\begin{document}
\title{\huge Outage Probability Analysis of IRS-Assisted Systems Under Spatially Correlated Channels}
\author{Trinh Van Chien, Anastasios K. Papazafeiropoulos, Lam Thanh Tu, Ribhu Chopra,\\ Symeon Chatzinotas, and Bj\"orn Ottersten
\thanks{T. V. Chien, S. Chatzinotas, and B. Ottersten are with the University of
Luxembourg (SnT), Luxembourg (e-mail: trinhchien.dt3@gmail.com, symeon.chatzinotas@uni.lu, bjorn.ottersten@uni.lu).
 A. K. Papazafeiropoulos is with the CIS Research Group, University of Hertfordshire, Hatfield, U. K. and with SnT at the University of Luxembourg, Luxembourg (email: tapapazaf@gmail.com). L. T. Tu is with the Institute XLIM, University of Poitiers, France (email: lamthanh0@gmail.com). R. Chopra is with the Dept. of EEE, Indian Institute of Technology Guwahati, Guwahati, Assam, India (email: ribhu@outlook.com). This work was supported by RISOTTI-Reconfigurable Intelligent Surface for Smart Cities.}
}

\maketitle

\begin{abstract}
This paper investigates the impact of spatial channel correlation on the outage probability of intelligent reflecting surface (IRS)-assisted single-input single-output (SISO) communication systems. In particular, we derive a novel closed-form expression of the outage probability for arbitrary phase shifts and correlation matrices of the indirect channels. To shed light on the impact of the spatial correlation, we further attain the closed-form expressions for two common scenarios met in the literature when the large-scale fading coefficients are expressed by the loss over a propagation distance. Numerical results validate the tightness and effectiveness of the closed-form expressions. Furthermore, the spatial correlation offers significant decreases in the outage probability as the direct channel is blocked. 
\end{abstract}
\vspace{-0.1cm}
\begin{IEEEkeywords}
Intelligent reflecting surface, outage probability, spatial correlation.
\end{IEEEkeywords}
\IEEEpeerreviewmaketitle
\vspace*{-0.7cm}
\section{Introduction}
\vspace*{-0.2cm}
Intelligent reflecting surface (IRS), which relies on the recent advancements of meta-materials, is being recognized as a promising technology in the future wireless networks not only due to its significant gains in spectral and energy efficiency but with the ultimate goal for the smart propagation environment control \cite{basar2019wireless}. Specifically, an IRS consists of low-cost, nearly passive reflecting elements that can reconfigure the interaction with impinging electromagnetic waves. Owing to its promising benefits, IRS has received substantial attention as the literature reveals. Most works have focused on the phase-shift matrix design with/without the transmit beamforming using the instantaneous channels and different communication objectives as \cite{wu2019intelligent,Pan2020,Guo2020,Nadeem2020} and reference therein.

While the majority of previous works have improved our knowledge of IRS-assisted systems, they have relied on the common assumption of tractable independent and
identically distributed (i.i.d.) Rayleigh fading channel model and/or in the asymptotic regime regarding the size of IRS (the number of its elements goes to infinity) for a tractable performance analysis \cite{han2019large}. To address this less realistic conjecture, few works have taken into account the spatial correlation among the phase shifts, being inevitable in practice, for instance \cite{Nadeem2020}. Recently, this consideration was further grounded by showing that that i.i.d. Rayleigh fading only appears in rare cases and by deriving a spatially correlated Rayleigh fading model following the IRS-design principles~\cite{bjornson2020rayleigh}. In parallel, although IRSs are suggested for coverage improvement, most existing works concern the study of common performance metrics such as the achievable rate and the bit error rate while the important outage probability has been neglected. As far as the authors are aware, the outage probability in IRS-enhanced  single-input single-output (SISO) scenarios has been investigated only in~\cite{guo2020outage,yang2020coverage,van2020coverage} without accounting for the spatial correlation.

Motivated by these research gaps, in this paper, we obtain the outage probability for SISO systems for an arbitrary (finite) number of phase shifts by focusing on the impact of channel correlation while including the presence of both direct and indirect propagation channels. Our main contributions are summarized as follows: $(i)$ By focusing on practical aspects including the effect of spatially correlated Rayleigh fading and the consideration of a finite number of IRS elements, we formulate the system model and derive the closed-form expression on the outage probability conditioned on the phase-shift matrix, which is a function only of the channel statistics. Our proposed closed-form expression allows the study of the impact of different spatial correlation structures on the outage probability. We further prove that  the equal phase-shift selection provides the minimum outage probability at the asymptotic regime regarding the number of IRS elements.  $(ii)$ We provide analytical expressions of the outage probability under either equal or random phase shifts. These scenarios concretely unveil the impact of the spatial correlation. $(iii)$ Our closed-form expressions are verified by Monte-Carlo simulations and confirm that the IRS phase-shift selection and its correlation are of paramount importance to the system performance when the direct channel is blocked. Furthermore, the equal phase-shift design is close to the optimal one based on perfect channel state information (CSI).

\textit{Notation}: Upper and lower bold letters denotes matrices and vectors. A diagonal matrix is $\mathrm{diag} (\mathbf{x})$ with $\mathbf{x}$ in the diagonal and $\mathrm{tr}(\cdot)$ is the trace of a matrix.  $\mathcal{CN} (\cdot, \cdot)$ denotes the circularly symmetric Gaussian distribution while $\mathcal{U}(a,b)$ is the uniform distribution in the range $[a,b]$.  $\mathbf{I}_N$ is the identity matrix of size $N \times N$. The expectation and variance of a random variable are $\mathbb{E} \{ \cdot \}$ and $\mathsf{Var} \{ \cdot \}$. The Euclidean norm is $\|\cdot \|$, the superscript $(\cdot)^H$ is the Hermitian transpose, and $\circ$ is the Hadamard product. $\mod(\cdot)$ is the modulus operation and $\lfloor \cdot \rfloor$ is the floor function. 
$\Gamma(m,n) = \int_{n}^{\infty}t^{m-1} e^{-t} dt$ and $\Gamma(n) = \int_{0}^{\infty}t^{m-1} e^{-t} dt$ are the upper incomplete Gamma function and the Gamma function. Finally, $\mathrm{sinc}(x) = \sin(\pi x)/(\pi x)$ is the $\mathrm{sinc}$ function.
\vspace*{-0.5cm}
\section{System Model and Channel Capacity}
\vspace*{-0.25cm}
This paper considers a system with one single-antenna source sending signals to one single-antenna destination. An IRS with $N$ phase-shift elements is deployed between source and destination to enhance communication reliability.
\vspace*{-0.5cm}
\subsection{Channel Model}
\vspace*{-0.1cm}
Even though the propagation channels vary over time and frequency, we assume a block-fading channel model, where the channels are static and frequency flat in each coherence interval. We denote $h_{\mathrm{sd}} \in \mathbb{C}$ the channel between the source and the destination, $\mathbf{h}_{\mathrm{sr}} \in \mathbb{C}^N$ the channel vector between the source and the IRS, and $\mathbf{h}_{\mathrm{rd}} \in \mathbb{C}^N$ the channel vector between the IRS and the destination. Notably, we take into account for correlated Rayleigh channel model. Mathematically, the channels are described as
\begin{align} \label{eq:Channels}
	h_{\mathrm{sd}} \sim \mathcal{CN} (0, \beta_{\mathrm{sd}}), \mathbf{h}_{\mathrm{sr}} \sim \mathcal{CN} (\mathbf{0}, \mathbf{R}_{\mathrm{sr}} ), \mathbf{h}_{\mathrm{rd}} \sim \mathcal{CN} (\mathbf{0}, \mathbf{R}_{\mathrm{rd}} ), 
\end{align}
where $\beta_{\mathrm{sd}} \in \mathbb{C}, \mathbf{R}_{\mathrm{sr}} \in \mathbb{C}^{N \times N}$, and $\mathbf{R}_{\mathrm{rd}} \in \mathbb{C}^{N \times N}$ are the large-scale fading coefficient and the covariance matrices, respectively. Note that for the sake of clarity, we have incorporated the  large-scale fading coefficients $\beta_{\mathrm{sr}} \in \mathbb{C}$ and $\beta_{\mathrm{rd}} \in \mathbb{C}$ of the assisted link inside $\mathbf{R}_{\mathrm{sr}}$ and $\mathbf{R}_{\mathrm{rd}}$. The channel model \eqref{eq:Channels} is aligned with an infinitesimal small source, which radiates isotropically in the IRS \cite{di2020smart}.
 Henceforth, $h_{\mathrm{sd}}$ is also called the direct channel and $\mathbf{h}_{\mathrm{sr}}, \mathbf{h}_{\mathrm{rd}}$ comprise the indirect or else the cascaded channel.\footnote{For the given phase shifts, an optimization problem with the covariance matrices as variables of an $\ell_p-$norm problem \cite{patzold2003mobile} can be formulated and solved to match the covariance matrix model in \eqref{eq:Channels} with measurement data. If only the spatial correlation among the IRS elements is considered, the covariance matrices can be estimated by averaging the outer product over many different channel realizations.} The considered channels in \eqref{eq:Channels} are of practical interest for the performance analysis of IRS-assisted systems, where IRSs are fabricated as a planar array \cite{bjornson2020rayleigh}. 
\vspace*{-0.5cm}
\subsection{Data Transmission}
\vspace*{-0.2cm}
The received complex baseband signal at the destination, formulated under a first-order IRS reflection assumption, is given by
$y = \sqrt{\rho}\mathbf{h}_{\mathrm{sr}}^H \pmb{\Theta} \mathbf{h}_{\mathrm{rd}} s + \sqrt{\rho} h_{\mathrm{sd}} s + \tilde{n},$
where $\rho$ is the transmit power allocated by the source, $s$ is the transmit data symbol with $\mathbb{E}\{ |s|^2\} =1$, and $\tilde{n} \sim \mathcal{CN}( 0, \sigma^2)$ is the additive Gaussian noise. Also, 
$\pmb{\Theta} = \mathrm{diag} \left( e^{j \theta_1}, \ldots, e^{j \theta_N} \right)$,
is the phase-shift matrix, where $\theta_n \in [-\pi, \pi], \forall n,$ is the phase shifts induced by the IRS. By assuming coherent combination as in \cite{wu2019intelligent,van2020coverage}, we obtain the channel capacity for arbitrary phase shifts as
\begin{align} \label{eq:ArbRate}
C = \log_2 \left( 1 + \frac{\rho}{\sigma^2} \left| h_{\mathrm{sd}} + \mathbf{h}_{\mathrm{sr}}^H \pmb{\Theta} \mathbf{h}_{\mathrm{rd}} \right|^2 \right),~ \mbox{[b/s/Hz]}.
\end{align}
The channel capacity in \eqref{eq:ArbRate} is a function of the instantaneous channels varying upon the time and frequency plane, which can be very accurately known when the coherence intervals and pilot sequences are sufficiently long. We subsequently use this capacity expression to analyze the outage probability and obtain a closed-form expression, which will depend only on channel statistics. 
\vspace*{-0.5cm}
\section{Outage Probability Analysis}
\vspace*{-0.2cm}
From \eqref{eq:ArbRate}, we now consider the outage probability of the network, which is defined as
$P = \mathsf{Pr}( C < \xi),$
where $\xi$ [b/s/Hz] is the target rate. By setting $z = \sigma^2\left( 2^{\xi} -1 \right)/\rho$, the outage probability is recast to an equivalent signal-to-noise ratio  requirement as
\begin{align} \label{eq:Pcov}
P = \mathsf{Pr} \Big( \left|h_{\mathrm{sd}} + \mathbf{h}_{\mathrm{sr}}^H \pmb{\Theta} \mathbf{h}_{\mathrm{rd}} \right|^2 < z \Big). 
\end{align}
In this paper, the moment-matching method is used to manipulate the outage probability as follows.
\begin{theorem} \label{Theorem1}
For a given value $z$, the outage probability \eqref{eq:Pcov} is obtained in closed form as a function of the phase-shift matrix $\pmb{\Theta}$ as
\begin{align} \label{eq:PcovClosed}
P\left( \pmb{\Theta} \right) = 1 - {\Gamma \left( {{k_a},z/{w_a}} \right)}/{\Gamma \left( {{k_a}} \right)},
\end{align}
where the shape parameter $k_a$ and the scale parameter $w_a$ are respectively given by
\begin{align}
k_a =& \frac{\big( \beta_{\rm{sd}} + \rm{tr}( \widetilde{\pmb{\Theta}} ) \big)^2}{\beta_{\rm{sd}}^2 + 2\beta_{\rm{sd}}\rm{tr}(\widetilde{\pmb{\Theta}}) + \big( \rm{tr}(\widetilde{\pmb{\Theta}}) \big)^2 + 2\rm{tr}\big( \widetilde{\pmb{\Theta}}^2\big)}, \label{eq:ka} \\
 {w_a} =& \beta_{\rm{sd}} + \rm{tr} ( \widetilde{\pmb{\Theta}} ) + \frac{2{\rm{tr}}\big( \widetilde{\pmb{\Theta}}^2 \big)}{\beta_{\rm{sd}} + \rm{tr}( \widetilde{\pmb{\Theta}})} \label{eq:wa} 
\end{align}
with $\widetilde{\pmb{\Theta}} = \mathbf{R}_{\rm{rd}} \pmb{\Theta}^H \mathbf{R}_{\rm{sr}} \pmb{\Theta} $.
\end{theorem}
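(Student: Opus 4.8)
The plan is to treat $g = |h_{\mathrm{sd}} + \mathbf{h}_{\mathrm{sr}}^H \pmb{\Theta} \mathbf{h}_{\mathrm{rd}}|^2$ as approximately Gamma-distributed and to fix its shape and scale parameters by matching the first two moments. Since a Gamma random variable with shape $k$ and scale $w$ has mean $kw$ and variance $kw^2$, this forces $k = (\mathbb{E}\{g\})^2/\mathsf{Var}\{g\}$ and $w = \mathsf{Var}\{g\}/\mathbb{E}\{g\}$. The whole task therefore reduces to evaluating $\mathbb{E}\{g\}$ and $\mathsf{Var}\{g\}$ in closed form and then invoking the Gamma cumulative distribution function $\mathsf{Pr}(g < z) = 1 - \Gamma(k,z/w)/\Gamma(k)$.

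First I would write $U = \mathbf{h}_{\mathrm{sr}}^H \pmb{\Theta} \mathbf{h}_{\mathrm{rd}}$ and use the independence and zero mean of $h_{\mathrm{sd}}$ to discard the cross term, giving $\mathbb{E}\{g\} = \beta_{\mathrm{sd}} + \mathbb{E}\{|U|^2\}$. The term $\mathbb{E}\{|U|^2\}$ is handled by iterated expectation: averaging over $\mathbf{h}_{\mathrm{rd}}$ first replaces $\mathbf{h}_{\mathrm{rd}}\mathbf{h}_{\mathrm{rd}}^H$ by $\mathbf{R}_{\mathrm{rd}}$, and averaging the resulting quadratic form over $\mathbf{h}_{\mathrm{sr}}$ via $\mathbb{E}\{\mathbf{h}_{\mathrm{sr}}^H\mathbf{A}\mathbf{h}_{\mathrm{sr}}\} = \mathrm{tr}(\mathbf{A}\mathbf{R}_{\mathrm{sr}})$. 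A cyclic rearrangement of the trace identifies the result as $\mathrm{tr}(\widetilde{\pmb{\Theta}})$, so that $\mathbb{E}\{g\} = \beta_{\mathrm{sd}} + \mathrm{tr}(\widetilde{\pmb{\Theta}})$, which is the numerator of $k_a$.

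For the variance I would condition on the indirect channel, so that given $U$ the variable $X = h_{\mathrm{sd}} + U$ is complex Gaussian $\mathcal{CN}(U,\beta_{\mathrm{sd}})$; the standard non-central moments give $\mathbb{E}\{g^2 \mid U\} = |U|^4 + 4\beta_{\mathrm{sd}}|U|^2 + 2\beta_{\mathrm{sd}}^2$. The one genuinely laborious ingredient, and the step I expect to be the main obstacle, is the fourth moment $\mathbb{E}\{|U|^4\}$ of the bilinear form. I would obtain it by first conditioning on $\mathbf{h}_{\mathrm{sr}}$, so that $U$ is zero-mean complex Gaussian with variance $\sigma_U^2 = \mathbf{h}_{\mathrm{sr}}^H \pmb{\Theta}\mathbf{R}_{\mathrm{rd}}\pmb{\Theta}^H \mathbf{h}_{\mathrm{sr}}$ and hence $\mathbb{E}\{|U|^4 \mid \mathbf{h}_{\mathrm{sr}}\} = 2\sigma_U^4$, and then applying the second-moment identity for complex Gaussian quadratic forms, $\mathbb{E}\{(\mathbf{h}_{\mathrm{sr}}^H\mathbf{M}\mathbf{h}_{\mathrm{sr}})^2\} = (\mathrm{tr}(\mathbf{M}\mathbf{R}_{\mathrm{sr}}))^2 + \mathrm{tr}((\mathbf{M}\mathbf{R}_{\mathrm{sr}})^2)$ with $\mathbf{M} = \pmb{\Theta}\mathbf{R}_{\mathrm{rd}}\pmb{\Theta}^H$. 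Careful cyclic manipulation identifies these two traces with $(\mathrm{tr}(\widetilde{\pmb{\Theta}}))^2$ and $\mathrm{tr}(\widetilde{\pmb{\Theta}}^2)$, giving $\mathbb{E}\{|U|^4\} = 2[(\mathrm{tr}(\widetilde{\pmb{\Theta}}))^2 + \mathrm{tr}(\widetilde{\pmb{\Theta}}^2)]$.

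Assembling these pieces gives $\mathbb{E}\{g^2\}$, and subtracting $(\mathbb{E}\{g\})^2$ produces $\mathsf{Var}\{g\} = \beta_{\mathrm{sd}}^2 + 2\beta_{\mathrm{sd}}\mathrm{tr}(\widetilde{\pmb{\Theta}}) + (\mathrm{tr}(\widetilde{\pmb{\Theta}}))^2 + 2\mathrm{tr}(\widetilde{\pmb{\Theta}}^2)$, which is exactly the denominator of $k_a$. Forming $k_a = (\mathbb{E}\{g\})^2/\mathsf{Var}\{g\}$ recovers \eqref{eq:ka}, while $w_a = \mathsf{Var}\{g\}/\mathbb{E}\{g\}$, after splitting off the perfect square $(\beta_{\mathrm{sd}} + \mathrm{tr}(\widetilde{\pmb{\Theta}}))^2$ in the numerator, collapses to \eqref{eq:wa}. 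Substituting $k_a$ and $w_a$ into the Gamma CDF then yields \eqref{eq:PcovClosed}.
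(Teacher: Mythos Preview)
Your proposal is correct and follows essentially the same route as the paper: match $g$ to a Gamma distribution via its first two moments, compute $\mathbb{E}\{g\}=\beta_{\mathrm{sd}}+\mathrm{tr}(\widetilde{\pmb{\Theta}})$ and $\mathsf{Var}\{g\}=\beta_{\mathrm{sd}}^2+2\beta_{\mathrm{sd}}\mathrm{tr}(\widetilde{\pmb{\Theta}})+(\mathrm{tr}(\widetilde{\pmb{\Theta}}))^2+2\mathrm{tr}(\widetilde{\pmb{\Theta}}^2)$, and plug into the Gamma CDF. The only cosmetic difference is in the bookkeeping for $\mathbb{E}\{|U|^4\}$: the paper normalizes $U$ by $\|\mathbf{R}_{\mathrm{sr}}^{1/2}\pmb{\Theta}\mathbf{h}_{\mathrm{rd}}\|$ to extract an independent $\mathcal{CN}(0,1)$ factor, whereas you condition on $\mathbf{h}_{\mathrm{sr}}$ and invoke the quadratic-form second-moment identity; both land on $2[(\mathrm{tr}(\widetilde{\pmb{\Theta}}))^2+\mathrm{tr}(\widetilde{\pmb{\Theta}}^2)]$.
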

\begin{proof}
Let us define a new random variable $X = \left| h_{\mathrm{sd}} + \mathbf{h}_{\mathrm{sr}}^H \pmb{\Theta} \mathbf{h}_{\mathrm{rd}} \right|^2$. Its mean value is computed by the independence of the direct and indirect channels as
 \begin{align} 
 		& \mathbb{E} \{ X \} = \mathbb{E} \big\{ | h_{\mathrm{sd}}|^2 \big\} + \mathbb{E} \Big\{ \big| \mathbf{h}_{\mathrm{sr}}^H \pmb{\Theta} \mathbf{h}_{\mathrm{rd}} \big|^2 \Big\} = \beta_{\mathrm{sd}} + \mathbb{E} \Big\{ \big| \mathbf{h}_{\mathrm{sr}}^H \pmb{\Theta} \mathbf{h}_{\mathrm{rd}} \big|^2 \Big\}\nonumber\\
 		& = \beta_{\mathrm{sd}} + \mathbb{E} \Big\{ \mathbf{h}_{\mathrm{sr}}^H \pmb{\Theta} \mathbf{h}_{\mathrm{rd}} \mathbf{h}_{\mathrm{rd}}^H \pmb{\Theta}^H \mathbf{h}_{\mathrm{sr}} \Big\} = \beta_{\mathrm{sd}} + \mathrm{tr} \left( \mathbf{R}_{\mathrm{rd}} \pmb{\Theta}^H \mathbf{R}_{\mathrm{sr}} \pmb{\Theta} \right).\label{eq:MeanX}
 \end{align}
 Its second moment, denoted by $\mathbb{E} \{ X^2 \}$, is computed as
 \begin{align} 
 		&\!\!\mathbb{E} \{ X^2 \} = \mathbb{E} \Big\{ \left| h_{\mathrm{sd}} + \mathbf{h}_{\mathrm{sr}}^H \pmb{\Theta} \mathbf{h}_{\mathrm{rd}} \right|^4 \Big\} \nonumber\\
 		& \!\!= \mathbb{E} \Big\{ \Big| \left| h_{\mathrm{sd}} \right|^2 + h_{\mathrm{sd}}^\ast \mathbf{h}_{\mathrm{sr}}^H \pmb{\Theta} \mathbf{h}_{\mathrm{rd}} + h_{\mathrm{sd}} \mathbf{h}_{\mathrm{rd}}^H \pmb{\Theta}^H \mathbf{h}_{\mathrm{sr}} + \left| \mathbf{h}_{\mathrm{sr}}^H \pmb{\Theta} \mathbf{h}_{\mathrm{rd}} \right|^2 \Big|^2 \Big\}.\label{eq:Ex2}
\end{align}
Let us define $ a = | h_{\mathrm{sd}} |^2,$ $b=h_{\mathrm{sd}}^\ast \mathbf{h}_{\mathrm{sr}}^H \pmb{\Theta} \mathbf{h}_{\mathrm{rd}},$ $c= h_{\mathrm{sd}} \mathbf{h}_{\mathrm{rd}}^H \pmb{\Theta}^H \mathbf{h}_{\mathrm{sr}},$ and $d=\left| \mathbf{h}_{\mathrm{sr}}^H \pmb{\Theta} \mathbf{h}_{\mathrm{rd}} \right|^2$. Thus, \eqref{eq:Ex2} is recast as
 \begin{align} \label{eq:Ex2v1}
	 		&\mathbb{E} \{ X^2 \} = \mathbb{E} \left\{ |a|^2 \right\} + \mathbb{E} \left\{ |b|^2 \right\} + \mathbb{E} \left\{ |c|^2 \right\} + 2 \mathbb{E} \left\{ a d \right\} + \mathbb{E} \left\{ |d|^2 \right\},
 \end{align}
 where $\mathbb{E} \left\{ |a|^2 \right\} = 2 \beta_{\mathrm{sd}}^2$ by  \cite[Lemma~9]{Chien2020book}. Also, we obtain
 \begin{align}
 	\mathbb{E} \big\{ |b|^2 \big\} = \mathbb{E} \left\{ ad \right\} = \mathbb{E} \big\{ |c|^2 \big\} = \beta_{\mathrm{sd}} \mathrm{tr} \left( \mathbf{R}_{\mathrm{rd}} \pmb{\Theta}^H \mathbf{R}_{\mathrm{sr}} \pmb{\Theta} \right),
 \end{align}
 due to the independence among the channels. The last expectation of \eqref{eq:Ex2v1} is computed as
 \begin{align} \label{eq:Expd2}
 		\!\mathbb{E} \left\{ |d|^2 \right\}\! = \mathbb{E} \left\{ \left\| \mathbf{R}_{\mathrm{sr}}^{1/2} \pmb{\Theta} \mathbf{h}_{\mathrm{rd}} \right\|^4 \left| \frac{\mathbf{h}_{\mathrm{sr}}^H \pmb{\Theta} \mathbf{h}_{\mathrm{rd}}}{\left\|\mathbf{R}_{\mathrm{sr}}^{1/2} \pmb{\Theta} \mathbf{h}_{\mathrm{rd}} \right\|} \frac{\mathbf{h}_{\mathrm{rd}}^H \pmb{\Theta}^H \mathbf{h}_{\mathrm{sr}}}{\left\|\mathbf{R}_{\mathrm{sr}}^{1/2} \pmb{\Theta} \mathbf{h}_{\mathrm{rd}} \right\|} \right|^2 \right\} .
 \end{align}
 Let us define $t = \mathbf{h}_{\mathrm{sr}}^H \pmb{\Theta} \mathbf{h}_{\mathrm{rd}} /\big\| \mathbf{R}_{\mathrm{sr}}^{1/2} \pmb{\Theta} \mathbf{h}_{\mathrm{rd}} \big\|$, then by conditioning on $\mathbf{h}_{\mathrm{rd}}$, $t$ is a circularly symmetric Gaussian variable. Furthermore, thanks to the normalization factor $\big\| \mathbf{R}_{\mathrm{sr}}^{1/2} \pmb{\Theta} \mathbf{h}_{\mathrm{rd}} \big\|$ and the use of the circular symmetric property, we have $t \sim \mathcal{CN} (0, 1)$. Hence, \eqref{eq:Expd2} is manipulated as
 \begin{align} 
 		&\mathbb{E} \left\{ |d|^2 \right\} = \mathbb{E} \left\{ \left\| \mathbf{R}_{\mathrm{sr}}^{1/2} \pmb{\Theta} \mathbf{h}_{\mathrm{rd}} \right\|^4 \left| t \right|^4 \right\} \stackrel{(a)}{=} \mathbb{E} \left\{ \left\| \mathbf{R}_{\mathrm{sr}}^{1/2} \pmb{\Theta} \mathbf{h}_{\mathrm{rd}} \right\|^4 \right\} \mathbb{E} \left\{ \left| t \right|^4 \right\}\nonumber \\
 		&\stackrel{(b)}{=} 2 \big| \mathrm{tr} \big( \mathbf{R}_{\mathrm{rd}} \pmb{\Theta}^H \mathbf{R}_{\mathrm{sr}} \pmb{\Theta} \big) \big|^2 + 2 \mathrm{tr} \Big( \big( \mathbf{R}_{\mathrm{rd}} \pmb{\Theta}^H \mathbf{R}_{\mathrm{sr}} \pmb{\Theta} \big)^2 \Big), \label{eq:Expd2v1}
 \end{align}
 where $(a)$ is obtained by the fact that $\pmb{\Theta} \mathbf{h}_{\mathrm{rd}}$ and $t$ are independent; $(b)$ is obtained by the use of \cite[Lemma~9]{Chien2020book} to compute the forth moment of random variables. Combining the results of \eqref{eq:Ex2}-\eqref{eq:Expd2v1}, we obtain
 \begin{align} 
 		 \mathbb{E} \{ X^2 \}& = 2 \beta_{\mathrm{sd}}^2 + 4 \beta_{\mathrm{sd}} \mathrm{tr} \left( \mathbf{R}_{\mathrm{rd}} \pmb{\Theta}^H \mathbf{R}_{\mathrm{sr}} \pmb{\Theta} \right) \nonumber\\
 		&+ 2 \big| \mathrm{tr} \big( \mathbf{R}_{\mathrm{rd}} \pmb{\Theta}^H \mathbf{R}_{\mathrm{sr}} \pmb{\Theta} \big) \big|^2 + 2 \mathrm{tr} \Big( \big( \mathbf{R}_{\mathrm{rd}} \pmb{\Theta}^H \mathbf{R}_{\mathrm{sr}} \pmb{\Theta} \big)^2 \Big).\label{eq:X2}
 \end{align}
 By exploiting the identity $\mathsf{Var}\left\{ X \right\} = \mathbb{E} \{ X^2 \} - \left| \mathbb{E} \{ X \} \right|^2$ along with the results in \eqref{eq:MeanX} and \eqref{eq:X2}, we obtain
 \begin{align} 
 	 		\mathsf{Var}\left\{ X \right\}& = \beta_{\mathrm{sd}}^2 + 2 \beta_{\mathrm{sd}} \mathrm{tr} \left( \mathbf{R}_{\mathrm{rd}} \pmb{\Theta}^H \mathbf{R}_{\mathrm{sr}} \pmb{\Theta} \right)\nonumber \\
 		& + \big( \mathrm{tr} \big( \mathbf{R}_{\mathrm{rd}} \pmb{\Theta}^H \mathbf{R}_{\mathrm{sr}} \pmb{\Theta} \big) \big)^2 + 2 \mathrm{tr} \Big( \big( \mathbf{R}_{\mathrm{rd}} \pmb{\Theta}^H \mathbf{R}_{\mathrm{sr}} \pmb{\Theta} \big)^2 \Big).\label{eq:VarX}
 	 \end{align}
 We match the random variable $X$ to a Gamma distribution with the shape and scale parameters as $k_a = \frac{\left( \mathbb{E} \{ X \} \right)^2}{ \mathsf{Var}\left\{ X \right\}}, w_a = \frac{\mathsf{Var}\left\{ X \right\}}{\mathbb{E} \{ X \}}$. Exploiting \eqref{eq:MeanX} and \eqref{eq:VarX}, the result is as in the theorem.
\end{proof}
\begin{figure}[t]
	\centering
	\includegraphics[trim=3.6cm 9.7cm 4.4cm 10.3cm, clip=true, width=2.3in]{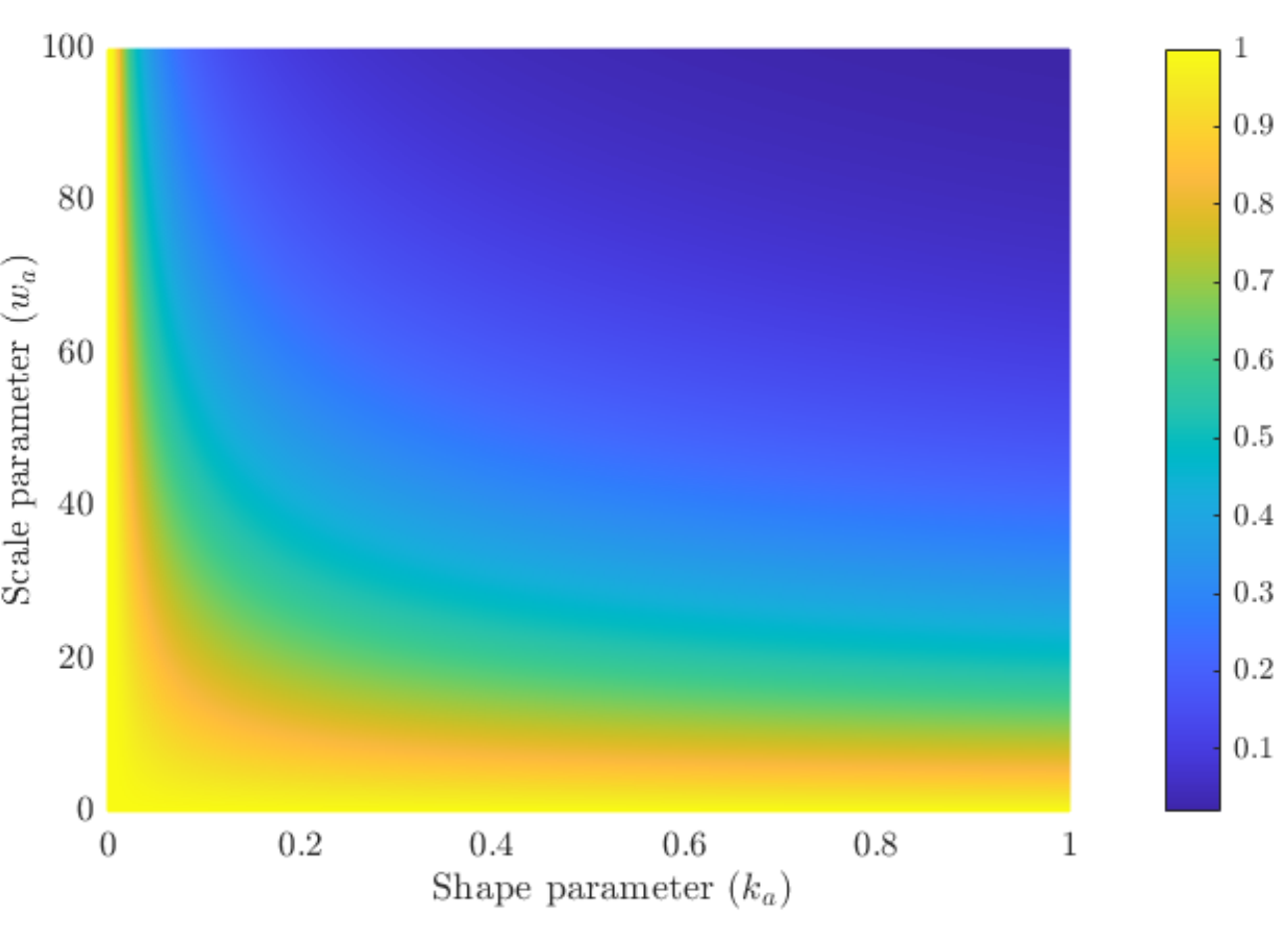} \vspace*{-0.2cm}
	\caption{The outage probability v.s. shape and scale parameters.}
	\label{FigPcovkawa}
	\vspace*{-0.6cm}
\end{figure}
We emphasize the advantage of the closed-form expression obtained in \eqref{eq:PcovClosed} since it can describe different spatial correlation models and enables their study. From the mathematical point of view, this is a generalized version of \cite{van2020coverage} concerning uncorrelated Rayleigh channels. Unlike previous works relying on the instantaneous CSI to optimize the phase shifts, e.g., \cite{le2020robust}, the proposed phase-shift design utilizing \eqref{eq:PcovClosed} as the utility function is independent of the small-scale fading coefficients and is stable for a long period of time. The phase shifts do not need optimization at every coherence interval but only when the large-scale statistics change, e.g., at every several coherence intervals.
\begin{remark}
\vspace{-0.1cm}
The closed-form expression in Theorem~\ref{Theorem1} is obtained for the coexistence of both direct and indirect channels. The outage probability in \eqref{eq:PcovClosed} can be also applied in the case of no direct channel between the source and the destination. Especially, the closed-form expression is still computed by \eqref{eq:PcovClosed}, but the shape and scale parameters are obtained from \eqref{eq:ka} and \eqref{eq:wa} with $\beta_{\mathrm{sd}} =0$. Moreover, the methodology used in this paper can be extended to account for imperfect CSI and/or multi-user multiple-input multiple-output (MIMO) scenarios. For instance, the imperfect CSI for the SISO system is considered by utilizing the aggregated channel, i.e., $h_{\mathrm{sd}} + \mathbf{h}_{\mathrm{sr}}^H \pmb{\Theta} \mathbf{h}_{\mathrm{rd}} $, which comprises both the direct and indirect channels. We note that despite the complex structure of the IRS-assisted channels, the linear minimum mean square error estimation can be exploited to compute the moments of non-Gaussian distributions, and then to obtain the channel estimate in the pilot training phase, by applying similar analytical steps as in \cite{Chien2021TWC}. These investigations are left for future work.
\vspace{-0.2cm}
\end{remark}
 Below, we provide a specific example with the real correlation matrices that describes properly the IRS structure \cite{bjornson2020rayleigh}.
\begin{example} \label{Example1}
Under the assumption in \cite{bjornson2020rayleigh}, i.e., for a rectangular phase-shift array with $N= N_V N_H$ where $N_V$ and $N_H$ are the elements to per row and per column, respectively, and under isotropic Rayleigh fading, we express $\mathbf{R}_{\rm{sr}}$ and $\mathbf{R}_{\rm{rd}}$ as
\begin{align} \label{eq:CovE}
\mathbf{R}_{\rm{sr}} = \beta_{\mathrm{sr}} d_Hd_V \mathbf{R} \mbox{ and } \mathbf{R}_{\rm{rd}} = \beta_{\rm{rd}}d_Hd_V \mathbf{R},
\end{align}
where 
the size of each phase shift element is $d_H \times d_V$, where $d_V$ is the vertical height and $d_H$ is the horizontal width. The matrix $\mathbf{R}$ denotes the spatial correlation at the IRS whose $(n,m)$-th coefficient is
$r^{nm} = \mathrm{sinc} \left( 2 \|\mathbf{u}_n - \mathbf{u}_m \|/\lambda\right)$,
where $\mathbf{u}_\alpha = [0, \mod(\alpha-1, N_H)d_H, \lfloor (\alpha-1)/N_H \rfloor d_V]^T$, $\alpha \in \{n,m\}$; $\lambda$ is the wavelength of a plane wave. Thus, we obtain $\rm{tr} (\widetilde{\pmb{\Theta}}) = \beta_{\mathrm{sr}} \beta_{\rm{rd}} d_H^2 d_V^2 \rm{tr} \big( \mathbf{R} \pmb{\Theta}^H \mathbf{R} \pmb{\Theta} \big)$ and $ \rm{tr} \big(\widetilde{\pmb{\Theta}}^2 \big) = \beta_{\mathrm{sr}}^2 \beta_{\rm{rd}}^2 d_H^4 d_V^4 \rm{tr} \big( \big(\mathbf{R} \pmb{\Theta}^H \mathbf{R} \pmb{\Theta}\big)^2 \big).$
By ignoring the spatial correlation, i.e.,
 $\mathbf{R}_{\mathrm{sr}} = \beta_{\mathrm{sr}} d_H d_V \mathbf{I}_N$ and $\mathbf{R}_{\mathrm{rd}} = \beta_{\mathrm{rd}} d_H d_V \mathbf{I}_N$,  we obtain 
$\mathrm{tr} (\widetilde{\pmb{\Theta}}) = N\beta_{\mathrm{sr}} \beta_{\mathrm{rd}}d_H^2 d_V^2$ and $ \mathrm{tr} \big(\widetilde{\pmb{\Theta}}^2 \big) = N\beta_{\mathrm{sr}}^2 \beta_{\mathrm{rd}}^2 d_H^4 d_V^4$. 
\end{example}
The example shows that Theorem~\ref{Theorem1} can be applied for various covariance matrices by adjusting the inputs of the incomplete/complete Gamma functions.  For the spatially uncorrelated fading, the closed-form expression is directly proportional to the array gain, which can be easily observed by neglecting the spatial correlation, which is however inevitable in practical systems enabled by IRSs. Since the shape and scale parameters are non-negative, we now introduce the optimal phase-shift design at an asymptotic regime as in Corollary~\ref{Corollary0}.
\begin{corollary} \label{Corollary0}
For the spatial correlation channel model \eqref{eq:CovE}, the equal phase-shift selection minimizes the outage probability as the number of IRS elements goes large, i.e., $N \rightarrow \infty$.
\end{corollary}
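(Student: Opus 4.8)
The plan is to reduce the minimization of the outage probability \eqref{eq:PcovClosed} over the phase-shift matrix to the maximization of the mean aggregate gain $\mathbb{E}\{X\}=\beta_{\mathrm{sd}}+\mathrm{tr}(\widetilde{\pmb{\Theta}})$, and then to show that equal phase shifts maximize this mean. The reduction is exactly what makes the $N\to\infty$ hypothesis essential, since for finite $N$ the outage also depends on the variance of $X$ through $\mathrm{tr}(\widetilde{\pmb{\Theta}}^2)$. I would begin by making the phase dependence explicit: substituting \eqref{eq:CovE} into $\widetilde{\pmb{\Theta}}=\mathbf{R}_{\mathrm{rd}}\pmb{\Theta}^H\mathbf{R}_{\mathrm{sr}}\pmb{\Theta}$ and using that $\pmb{\Theta}$ is diagonal and unitary with $[\pmb{\Theta}]_{nn}=e^{j\theta_n}$ while $\mathbf{R}$ is real and symmetric, I would obtain
\begin{align}
\mathrm{tr}(\widetilde{\pmb{\Theta}})=\beta_{\mathrm{sr}}\beta_{\mathrm{rd}}d_H^2 d_V^2\sum_{n=1}^{N}\sum_{m=1}^{N}\big(r^{nm}\big)^2\cos(\theta_n-\theta_m).
\end{align}
Since every cosine is at most one, with equality precisely when $\theta_n=\theta_m$, this sum and hence $\mathrm{tr}(\widetilde{\pmb{\Theta}})$ and $\mathbb{E}\{X\}$ are maximized by the equal phase-shift selection, for which $\mathrm{tr}(\widetilde{\pmb{\Theta}})=\beta_{\mathrm{sr}}\beta_{\mathrm{rd}}d_H^2 d_V^2\,\mathrm{tr}(\mathbf{R}^2)$. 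This step is immediate.

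Next I would establish that asymptotically the outage is governed only by the mean. Writing $\mu=\beta_{\mathrm{sd}}+\mathrm{tr}(\widetilde{\pmb{\Theta}})$ and $S=\mathrm{tr}(\widetilde{\pmb{\Theta}}^2)$, formulas \eqref{eq:ka}--\eqref{eq:wa} read $k_a=\mu^2/(\mu^2+2S)$ and $w_a=\mu+2S/\mu$, so that $k_aw_a=\mu$. The key order estimates are $\mu=\Theta(N)$ and $S=O(N)$: the diagonal of $\mathbf{R}$ forces $\mathrm{tr}(\mathbf{R}^2)=\sum_{n,m}(r^{nm})^2\ge N$, while convergence of the per-row sums of the squared sinc kernel gives $\mathrm{tr}(\mathbf{R}^2)=\Theta(N)$ and a bounded spectral norm $\|\mathbf{R}\|$; combined with the previous step this yields $\mu=\Theta(N)$ for the equal-phase design, and the bound $|\mathrm{tr}(\mathbf{M}^2)|\le\|\mathbf{M}\|_F^2$ with $\mathbf{M}=\mathbf{R}\pmb{\Theta}^H\mathbf{R}\pmb{\Theta}$ together with the unitarity of $\pmb{\Theta}$ gives $S=O(\|\mathbf{R}\|^2\mathrm{tr}(\mathbf{R}^2))=O(N)$. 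Hence $S/\mu^2=\Theta(1/N)\to0$, so that $k_a\to1$ and $z/w_a=(z/\mu)(1+o(1))\to0$.

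I would then insert these limits into \eqref{eq:PcovClosed}. Using the small-argument expansion of the lower incomplete Gamma function, $\Gamma(k_a)-\Gamma(k_a,x)=x^{k_a}/k_a+O(x^{k_a+1})$ with $x=z/w_a$, together with $\Gamma(k_a)\to\Gamma(1)=1$, I would obtain $\log P(\pmb{\Theta})=k_a\log(z/w_a)-\log\Gamma(k_a+1)+o(1)=\log\!\big(z/(\beta_{\mathrm{sd}}+\mathrm{tr}(\widetilde{\pmb{\Theta}}))\big)+o(1)$, since the $S$-dependent corrections enter only at order $(\log N)/N$. Thus $P(\pmb{\Theta})=z/(\beta_{\mathrm{sd}}+\mathrm{tr}(\widetilde{\pmb{\Theta}}))\,(1+o(1))$, which is strictly decreasing in $\mathrm{tr}(\widetilde{\pmb{\Theta}})$. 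Because the equal-phase design maximizes $\mathrm{tr}(\widetilde{\pmb{\Theta}})$ for every $N$, comparing it against any other matrix $\pmb{\Theta}$ — whose $\mathrm{tr}(\widetilde{\pmb{\Theta}})$ either diverges more slowly, giving a larger leading $P$, or stays bounded, so that $P$ is bounded away from $0$ while the equal-phase outage vanishes — yields $P(\pmb{\Theta}_{\mathrm{eq}})\le P(\pmb{\Theta})$ for all sufficiently large $N$, which is the claim.

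The main obstacle is precisely this asymptotic disentangling of mean and variance: one must verify the growth orders of $\mathrm{tr}(\mathbf{R}^2)$ and $\mathrm{tr}((\mathbf{R}\pmb{\Theta}^H\mathbf{R}\pmb{\Theta})^2)$, i.e. control the spectrum of the sinc correlation matrix uniformly in the unitary $\pmb{\Theta}$, and then justify the incomplete-Gamma expansion in the coupled limit $k_a\to1$ with $z/w_a\to0$, including the degenerate designs whose mean does not diverge. Everything before that — the cosine identity and the fact that aligned phases saturate it — is elementary.
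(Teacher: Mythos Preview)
Your proposal is correct and follows the same overall architecture as the paper: reduce the asymptotic minimization of $P(\pmb{\Theta})$ to the maximization of $\mathrm{tr}(\widetilde{\pmb{\Theta}})$, then observe that under the real symmetric model \eqref{eq:CovE} the trace equals $\beta_{\mathrm{sr}}\beta_{\mathrm{rd}}d_H^2 d_V^2\sum_{n,m}(r^{nm})^2\cos(\theta_n-\theta_m)$, which is maximized when all phases coincide. The paper's execution of the reduction step is somewhat different and terser: it shows $\partial P/\partial w_a<0$ directly from the incomplete Gamma derivative, then asserts (without the spectral-norm estimates you supply) that $k_a\to 1$ and $w_a\to\mathrm{tr}(\widetilde{\pmb{\Theta}})$ because numerator and denominator of $k_a$ are both dominated by $(\mathrm{tr}(\widetilde{\pmb{\Theta}}))^2$. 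Your route via the order bounds $\mu=\Theta(N)$, $S=O(N)$ and the small-argument expansion of the lower incomplete Gamma is more rigorous and yields the extra quantitative information $P(\pmb{\Theta})\sim z/\mu$, at the cost of having to control $\|\mathbf{R}\|$ and handle the degenerate designs separately; the paper's derivative argument is shorter but leaves the growth orders unjustified.
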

\begin{proof}
We first take the first-order derivative of the outage probability with respect to $w_a$ as
\begin{equation}\label{eq:1stDerivative}
\frac{\partial P(\pmb{\Theta})}{\partial w_a} = \frac{-1}{\Gamma(k_a)}\frac{\partial \Gamma(k_a, z/w_a)}{\partial w_a} =\frac{-z^{k_a}e^{-z/w_a}}{\Gamma(k_a)w_a^{k_a + 1}} < 0,
\end{equation}
by exploiting \cite{WMathematicaOnline}. For a given shape parameter $k_a$, the obtained result in \eqref{eq:1stDerivative} indicates that the outage probability is a decreasing function of the scale parameter $w_a$, which is positive based on \eqref{eq:wa}. As $N$ grows, both the numerator and denominator of the shape parameter scale up with the same order, i.e. $(\mathrm{tr}(\widetilde{\pmb{\Theta}}))^2$, while the scale parameter is dominated by $\mathrm{tr}(\widetilde{\pmb{\Theta}})$. Thus,
\begin{equation}
k_a \rightarrow 1, w_a \rightarrow \mathrm{tr} \big( \widetilde{\pmb{\Theta}} \big) \mbox{ as } N \rightarrow \infty.
\end{equation}
Moreover, by utilizing the spatial structure in \eqref{eq:CovE}, one obtains
\begin{equation}\label{eq:TracePro}
\begin{split}
\mathrm{tr} \big( \widetilde{\pmb{\Theta}} \big) &= \beta_{\mathrm{sr}} \beta_{\mathrm{rd}} d_H^2 d_V^2 \sum_{n=1}^N \sum_{m=1}^N r^{nm} r^{mn} e^{j(\theta_n - \theta_m)}\\
&\leq \beta_{\mathrm{sr}} \beta_{\mathrm{rd}} d_H^2 d_V^2 \sum_{n=1}^N \sum_{m=1}^N r^{nm} r^{mn}
,
\end{split}
\end{equation}
which is maximized when $\theta_n = \theta_m,\forall n,m$. By combining \eqref{eq:1stDerivative}--\eqref{eq:TracePro}, we conclude the proof.
\end{proof}
Corollary~\ref{Corollary0} gives a simple and effective phase-shift selection at the asymptotic regime. By means of the channel hardening property \cite{bjornson2020rayleigh, zhi2020power}, this may be also a good (but suboptimal) phase-shift design in the case of a finite number of phase shifts. For the sake of completeness, Fig.~\ref{FigPcovkawa} plots the outage probability as a bivariate function of both shape and scale parameters. For simplicity, we set $\beta_{\mathrm{sd}} = \beta_{\mathrm{sr}} = \beta_{\mathrm{rd}} = 1$ and $z=2$. The outage probability decreases monotonically with $k_a$ and $w_a$, thus, as a hint, a good selection of the phase shifts results in high values of the shape and scale parameters. Besides, Fig.~\ref{FigPcovkawa}  proves the correctness of Corollary~\ref{Corollary0}.

We now investigate two well-established scenarios regarding the finite number of  phase shifts. The first one, described by Corollary~\ref{Corollary1}, corresponds to an identical phase shifts design, which is a low-cost setup \cite{basar2019wireless}. Another popular scenario (Corollary~\ref{Corollary2}) concerns the randomness of phase shifts \cite{wu2019intelligent}. 
\begin{corollary} \label{Corollary1}
\vspace{-0.1cm}
If the phase shifts are equal, $i.e., \theta_1 = \ldots = \theta_N$, the outage probability is computed as in \eqref{eq:PcovClosed} with 
\begin{align}
	k_a =& \frac{\big( \beta_{\rm{sd}} + \rm{tr} \big( \mathbf{R}_{rd} \mathbf{R}_{sr} \big) \big)^2}{\beta_{\rm{sd}}^2 \!+\! 2\beta_{\rm{sd}}\rm{tr}\big( \mathbf{R}_{rd} \mathbf{R}_{sr} \big) \!+ \!\big( \rm{tr} \big( \mathbf{R}_{rd} \mathbf{R}_{sr} \big)\! \big)^2 \!+\! 2\rm{tr} (( \mathbf{R}_{\rm{rd}} \mathbf{R}_{\rm{sr}} )^2 ) },\\
\!	{w_a} =& \beta_{\rm{sd}} + \rm{tr} \big( \mathbf{R}_{rd} \mathbf{R}_{sr} \big) + \frac{ 2\rm{tr} \big( \big( \mathbf{R}_{\rm{rd}} \mathbf{R}_{\rm{sr}} \big)^2 \big)}{\beta_{\rm{sd}} + \rm{tr} \big( \mathbf{R}_{rd} \mathbf{R}_{sr} \big)}.
\end{align}
\end{corollary}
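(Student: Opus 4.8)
The plan is to recognize that Corollary~\ref{Corollary1} is a direct specialization of Theorem~\ref{Theorem1}: the entire analytical content is already carried by the moment-matching derivation, so all that remains is to evaluate the effective correlation matrix $\widetilde{\pmb{\Theta}} = \mathbf{R}_{\mathrm{rd}} \pmb{\Theta}^H \mathbf{R}_{\mathrm{sr}} \pmb{\Theta}$ under the equal-phase hypothesis and to substitute the resulting traces into \eqref{eq:ka} and \eqref{eq:wa}.

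First I would observe that the assumption $\theta_1 = \ldots = \theta_N =: \theta$ collapses the phase-shift matrix to a scalar multiple of the identity, namely $\pmb{\Theta} = e^{j\theta}\mathbf{I}_N$, so that $\pmb{\Theta}^H = e^{-j\theta}\mathbf{I}_N$. The two scalar phase factors then cancel inside the quadratic form,
\begin{equation}
\pmb{\Theta}^H \mathbf{R}_{\mathrm{sr}} \pmb{\Theta} = e^{-j\theta}e^{j\theta}\mathbf{R}_{\mathrm{sr}} = \mathbf{R}_{\mathrm{sr}},
\end{equation}
which yields $\widetilde{\pmb{\Theta}} = \mathbf{R}_{\mathrm{rd}}\mathbf{R}_{\mathrm{sr}}$, independent of the common angle $\theta$. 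The remaining step is purely a substitution: I would replace $\mathrm{tr}(\widetilde{\pmb{\Theta}})$ by $\mathrm{tr}(\mathbf{R}_{\mathrm{rd}}\mathbf{R}_{\mathrm{sr}})$ and $\mathrm{tr}(\widetilde{\pmb{\Theta}}^2)$ by $\mathrm{tr}((\mathbf{R}_{\mathrm{rd}}\mathbf{R}_{\mathrm{sr}})^2)$ throughout \eqref{eq:ka} and \eqref{eq:wa}, which reproduces verbatim the shape and scale parameters claimed in the statement, with the closed-form outage probability \eqref{eq:PcovClosed} applying unchanged.

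There is no genuine analytical obstacle here, since Theorem~\ref{Theorem1} already covers arbitrary $\pmb{\Theta}$ and the corollary only fixes a particular configuration. The single point deserving care is confirming that the common scalar phase $e^{j\theta}$ truly drops out of $\widetilde{\pmb{\Theta}}$ rather than merely out of $\mathrm{tr}(\widetilde{\pmb{\Theta}})$; this is what makes the parameters insensitive to the chosen angle and thereby reduces the statement to a clean instance of the general formula. I would therefore present the cancellation explicitly as above before invoking Theorem~\ref{Theorem1} to conclude.
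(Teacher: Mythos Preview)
Your proposal is correct and follows essentially the same route as the paper: both reduce $\widetilde{\pmb{\Theta}}$ to $\mathbf{R}_{\mathrm{rd}}\mathbf{R}_{\mathrm{sr}}$ under the equal-phase assumption and then substitute into \eqref{eq:ka}--\eqref{eq:wa}. The only cosmetic difference is that the paper expands $\mathrm{tr}(\widetilde{\pmb{\Theta}})$ and $\mathrm{tr}(\widetilde{\pmb{\Theta}}^2)$ elementwise and cancels the phase factors $e^{j(\theta_n-\theta_m)}$ term by term, whereas you make the same cancellation once at the matrix level via $\pmb{\Theta}=e^{j\theta}\mathbf{I}_N$; your version is slightly cleaner but not a different argument.
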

\begin{proof}
From the assumption $\theta_n = \theta_m, \forall n,m = 1,\ldots, N,$ we recast $\mathrm{tr}\big( \widetilde{\pmb{\Theta}} \big)$ as
\begin{align} 
 \mathrm{tr} \big( \widetilde{\pmb{\Theta}} \big) &= \sum_{n=1}^N \sum_{m=1}^N r_{\mathrm{sr}}^{nm} r_{\mathrm{rd}}^{mn} = \mathrm{tr} \big( \mathbf{R}_{\mathrm{rd}} \mathbf{R}_{\mathrm{sr}} \big),\label{eq:Cor1}
\end{align}
where $r_{\mathrm{sr}}^{nm}$ and $r_{\mathrm{rd}}^{mn}$ are the $(n,m)-$th element of the matrices $\mathbf{R}_{\rm{sr}}$ and $\mathbf{R}_{\rm{rd}}$, respectively. Similarly, $\mathrm{tr}\big( \widetilde{\pmb{\Theta}}^2 \big)$ is written as
\begin{align} 
 \mathrm{tr}\big( \widetilde{\pmb{\Theta}}^2 \big) &= \sum_{n=1}^N \sum_{m=1}^N \left( \sum_{k=1}^N r_{\mathrm{rd}}^{nk} r_{\mathrm{sr}}^{km} \right) \left( \sum_{k=1}^N r_{\mathrm{rd}}^{mk} r_{\mathrm{sr}}^{kn} \right) = \mathrm{tr} \big( \big( \mathbf{R}_{\rm{rd}} \mathbf{R}_{\rm{sr}} \big)^2 \big).\label{eq:Cor2}
\end{align}
By using \eqref{eq:Cor1} and \eqref{eq:Cor2} into \eqref{eq:ka} and \eqref{eq:wa}, we, therefore, conclude the proof.
\vspace{-0.1cm}
\end{proof}
A simple phase-shift setup as in Corollary~\ref{Corollary1} annihilates the advantage of the phase-shift matrix; however, the outage probability is still a function of spatial correlation. 
\begin{corollary} \label{Corollary2}
If the phase shifts are uniformly distributed, i.e., $\theta_n \sim \mathcal{U}(-\pi, \pi), \forall n,$ the outage probability is obtained in closed form as in \eqref{eq:PcovClosed} with 
\begin{align}
	k_a = \frac{\left( \beta_{\mathrm{sd}} + \nu \right)^2}{\beta_{\rm{sd}}^2 + 2\beta_{\rm{sd}}\nu + \eta + 2\delta},	{w_a} = \beta_{\mathrm{sd}} + \frac{\beta_{\mathrm{sd}} \nu + \eta + 2 \delta}{\beta_{\mathrm{sd}} + \nu },
\end{align}
where the deterministic values $\nu, \eta,$ and $\delta$ are independent of the phase shifts and expressed in closed form as
\begin{align}
\nu =& \mathrm{tr} \big( \mathbf{R}_{\mathrm{rd}} \circ \mathbf{R}_{\mathrm{sr}} \big), \label{eq:nuval}\\
\eta =& \left( \mathrm{tr} \big( \mathbf{R}_{\mathrm{rd}} \circ \mathbf{R}_{\mathrm{sr}} \big) \right)^2 + \mathrm{tr} \left( \mathbf{R}_{\mathrm{rd}} \circ \mathbf{R}_{\mathrm{sr}} \big( \mathbf{R}_{\mathrm{rd}} \circ \mathbf{R}_{\mathrm{sr}} \big)^H \right) \notag \\
& - \mathrm{tr} \big( \mathbf{R}_{\mathrm{rd}} \circ \mathbf{R}_{\mathrm{sr}} \circ \mathbf{R}_{\mathrm{rd}} \circ \mathbf{R}_{\mathrm{sr}} \big), \label{eq:Z}\\
\delta =& \mathrm{tr} \left( \big(\mathbf{R}_{\mathrm{rd}} \mathrm{diag} \big( \mathbf{r}_{\mathrm{sr}} \big) \big)^2 \right) + \mathrm{tr} \left( \big(\mathbf{R}_{\mathrm{sr}} \mathrm{diag} \big( \mathbf{r}_{\mathrm{rd}} \big) \big)^2 \right) \notag \\
& - \mathrm{tr} \big( \mathbf{R}_{\mathrm{rd}} \circ \mathbf{R}_{\mathrm{sr}} \circ \mathbf{R}_{\mathrm{rd}} \circ \mathbf{R}_{\mathrm{sr}} \big), \label{eq:Q}
\end{align}
where $\mathbf{r}_{\mathrm{rd}}, \mathbf{r}_{\mathrm{sr}} \in \mathbb{C}^N$ are the diagonal vector of $\mathbf{R}_{\mathrm{rd}}$ and $\mathbf{R}_{\mathrm{sr}}$.
\end{corollary}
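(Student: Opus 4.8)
The plan is to reuse the moment-matching machinery of Theorem~\ref{Theorem1}, but now to treat the phase shifts as additional random variables and to average the conditional mean and variance expressions \eqref{eq:MeanX} and \eqref{eq:VarX} over the i.i.d. uniform phases. Concretely, the three phase-dependent quantities entering \eqref{eq:ka} and \eqref{eq:wa} --- namely $\mathrm{tr}(\widetilde{\pmb{\Theta}})$, $(\mathrm{tr}(\widetilde{\pmb{\Theta}}))^2$, and $\mathrm{tr}(\widetilde{\pmb{\Theta}}^2)$ --- are replaced by their expectations $\nu$, $\eta$, and $\delta$ over $\theta_n \sim \mathcal{U}(-\pi,\pi)$. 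The single tool driving every computation is the elementary identity $\mathbb{E}\{ e^{j k \theta_n} \} = 0$ for every nonzero integer $k$ and $\mathbb{E}\{ e^{j 0 \theta_n} \} = 1$, combined with the independence of the phases; this annihilates every term whose net phase exponent is nonzero and retains only the index-collision patterns that balance the phases.

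First I would expand $\mathrm{tr}(\widetilde{\pmb{\Theta}}) = \sum_{n,m} r_{\mathrm{rd}}^{nm} r_{\mathrm{sr}}^{mn} e^{j(\theta_n - \theta_m)}$ in terms of the entries of $\mathbf{R}_{\mathrm{rd}}$ and $\mathbf{R}_{\mathrm{sr}}$. Averaging removes all off-diagonal terms ($n \neq m$), leaving only $n=m$, so that $\mathbb{E}\{\mathrm{tr}(\widetilde{\pmb{\Theta}})\} = \sum_n r_{\mathrm{rd}}^{nn} r_{\mathrm{sr}}^{nn} = \mathrm{tr}(\mathbf{R}_{\mathrm{rd}} \circ \mathbf{R}_{\mathrm{sr}}) = \nu$, which is \eqref{eq:nuval}. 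Next I would square this double sum to obtain a quadruple sum carrying the phase $e^{j(\theta_n - \theta_m - \theta_p + \theta_q)}$. The surviving configurations are exactly those for which the index multiset $\{n,q\}$ equals $\{m,p\}$: the pattern $\{n=m,\, p=q\}$ contributes $\nu^2$, the pattern $\{n=p,\, m=q\}$ contributes $\sum_{n,m} |r_{\mathrm{rd}}^{nm}|^2 |r_{\mathrm{sr}}^{nm}|^2 = \mathrm{tr}((\mathbf{R}_{\mathrm{rd}} \circ \mathbf{R}_{\mathrm{sr}})(\mathbf{R}_{\mathrm{rd}} \circ \mathbf{R}_{\mathrm{sr}})^H)$, and, since the fully-diagonal configuration $n=m=p=q$ is counted in both patterns, it is subtracted once as $\mathrm{tr}(\mathbf{R}_{\mathrm{rd}} \circ \mathbf{R}_{\mathrm{sr}} \circ \mathbf{R}_{\mathrm{rd}} \circ \mathbf{R}_{\mathrm{sr}})$. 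Collecting these three terms reproduces $\eta$ in \eqref{eq:Z}. Here I would remark that $\mathrm{tr}(\widetilde{\pmb{\Theta}})$ is real for each phase realization, by the Hermitian symmetry of the covariance matrices, so that $(\cdot)^2$ and $|\cdot|^2$ coincide.

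The term $\delta$ is handled analogously but is the most laborious step. Writing $\mathrm{tr}(\widetilde{\pmb{\Theta}}^2) = \sum_{i,j,k,l} r_{\mathrm{rd}}^{ij} r_{\mathrm{sr}}^{jk} r_{\mathrm{rd}}^{kl} r_{\mathrm{sr}}^{li} e^{j(\theta_i - \theta_j + \theta_k - \theta_l)}$, the balance condition $\{i,k\} = \{j,l\}$ again leaves two patterns, $\{i=j,\, k=l\}$ and $\{i=l,\, k=j\}$, with the fully-diagonal overlap $i=j=k=l$ subtracted once. The crux is to recognize the resulting sums as compact matrix traces: the first pattern gives $\sum_{i,k} r_{\mathrm{rd}}^{ii} r_{\mathrm{sr}}^{ik} r_{\mathrm{rd}}^{kk} r_{\mathrm{sr}}^{ki} = \mathrm{tr}((\mathbf{R}_{\mathrm{sr}} \mathrm{diag}(\mathbf{r}_{\mathrm{rd}}))^2)$, the second gives $\mathrm{tr}((\mathbf{R}_{\mathrm{rd}} \mathrm{diag}(\mathbf{r}_{\mathrm{sr}}))^2)$, and the overlap subtracts $\mathrm{tr}(\mathbf{R}_{\mathrm{rd}} \circ \mathbf{R}_{\mathrm{sr}} \circ \mathbf{R}_{\mathrm{rd}} \circ \mathbf{R}_{\mathrm{sr}})$, which is precisely $\delta$ in \eqref{eq:Q}. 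Finally I would substitute $\mathrm{tr}(\widetilde{\pmb{\Theta}}) \to \nu$, $(\mathrm{tr}(\widetilde{\pmb{\Theta}}))^2 \to \eta$, and $\mathrm{tr}(\widetilde{\pmb{\Theta}}^2) \to \delta$ into \eqref{eq:ka} and \eqref{eq:wa} and reduce $w_a$ over the common denominator $\beta_{\mathrm{sd}} + \nu$ to reach the stated expressions. The hard part will be the second and third steps: correctly enumerating the surviving index-collision patterns, applying inclusion-exclusion so that the fully-diagonal term is counted exactly once, and, above all, re-expressing the surviving combinatorial sums as the Hadamard- and $\mathrm{diag}(\cdot)$-based matrix traces appearing in \eqref{eq:Z} and \eqref{eq:Q}.
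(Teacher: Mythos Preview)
Your plan is essentially identical to the paper's proof: expand $\mathrm{tr}(\widetilde{\pmb{\Theta}})$, $(\mathrm{tr}(\widetilde{\pmb{\Theta}}))^2$, and $\mathrm{tr}(\widetilde{\pmb{\Theta}}^2)$ into index sums, kill every term whose net phase exponent is nonzero via $\mathbb{E}\{e^{jk\theta_n}\}=0$ for $k\neq 0$, enumerate the two surviving collision patterns with an inclusion--exclusion correction for the fully-diagonal overlap, and repackage the results as the Hadamard- and $\mathrm{diag}(\cdot)$-based traces in \eqref{eq:nuval}--\eqref{eq:Q}. One small caution on the last step: a literal substitution of $\mathrm{tr}(\widetilde{\pmb{\Theta}})\to\nu$ and $\mathrm{tr}(\widetilde{\pmb{\Theta}}^2)\to\delta$ into \eqref{eq:wa} as displayed gives $\beta_{\mathrm{sd}}+\nu+2\delta/(\beta_{\mathrm{sd}}+\nu)$, which is \emph{not} the stated $w_a$ (the quantity $(\mathrm{tr}(\widetilde{\pmb{\Theta}}))^2$ does not appear explicitly in \eqref{eq:wa}). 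The route that reproduces the corollary --- and the one the paper actually takes --- is the one you describe at the outset: average \eqref{eq:MeanX} and \eqref{eq:VarX} to obtain $\mathbb{E}\{X\}=\beta_{\mathrm{sd}}+\nu$ and $\mathsf{Var}\{X\}=\beta_{\mathrm{sd}}^2+2\beta_{\mathrm{sd}}\nu+\eta+2\delta$, and only then form $k_a=(\mathbb{E}\{X\})^2/\mathsf{Var}\{X\}$ and $w_a=\mathsf{Var}\{X\}/\mathbb{E}\{X\}$.
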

\begin{proof}
\vspace{-0.2cm}
First, we calculate the mean value of \eqref{eq:MeanX} with respect to the phase shifts as
\begin{align} 
& \nu=\mathbb{E} \left\{ \mathrm{tr} \left( \mathbf{R}_{\mathrm{rd}} \pmb{\Theta}^H \mathbf{R}_{\mathrm{sr}} \pmb{\Theta} \right) \right\} = \sum_{n=1}^N \sum_{m=1}^N r_{\mathrm{sr}}^{nm} r_{\mathrm{rd}}^{mn} \mathbb{E} \big\{ e^{j(\theta_n - \theta_m)} \big\} \nonumber\\
&= \sum_{n=1}^N r_{\mathrm{sr}}^{nn} r_{\mathrm{rd}}^{nn} = \mathrm{tr} \big( \mathbf{R}_{\mathrm{rd}} \circ \mathbf{R}_{\mathrm{sr}} \big) .\label{eq:v1}
\end{align}
In \eqref{eq:v1}, the final result is obtained by the fact that $\mathbb{E} \big\{ e^{j(\theta_n - \theta_m)} \big\} = 1$, if $n=m$. Otherwise, $\mathbb{E} \big\{ e^{j(\theta_n - \theta_m)} \big\} = 0$. Consequently, the mean value $\mathbb{E} \{ X \}$ is obtained as
\begin{align} \label{eq:EXv1}
\mathbb{E} \{ X \} = \beta_{\mathrm{sd}} + \nu.
\end{align}
Next, we derive $\eta = \mathbb{E} \left\{ \left( \mathrm{tr} \big( \mathbf{R}_{\mathrm{rd}} \pmb{\Theta}^H \mathbf{R}_{\mathrm{sr}} \pmb{\Theta} \big) \right)^2 \right\}$ as
\begin{align} 
&\eta = \mathbb{E} \left\{ \left( \sum_{n=1}^N \sum_{m=1}^N r_{\mathrm{rd}}^{nm} r_{\mathrm{sr}}^{mn} e^{j(\theta_n - \theta_m)} \right)^2 \right\} \nonumber\\
& \stackrel{(a)}{=} \sum_{n=1}^N \sum_{m=1}^N \sum_{n'=1}^N \sum_{m'=1}^N r_{\mathrm{rd}}^{nm} r_{\mathrm{sr}}^{mn} r_{\mathrm{rd}}^{n'm'} r_{\mathrm{sr}}^{m'n'} \mathbb{E} \big\{ e^{j(\theta_n - \theta_m + \theta_{n'} - \theta_{m'})} \big\} \nonumber\\
&\stackrel{(b)}{=} \sum_{n=1}^N \sum_{m=1}^N r_{\mathrm{rd}}^{nn} r_{\mathrm{sr}}^{nn} r_{\mathrm{rd}}^{mm} r_{\mathrm{sr}}^{mm} + \sum_{n=1}^N \sum_{m=1}^N r_{\mathrm{rd}}^{nm} r_{\mathrm{sr}}^{mn} r_{\mathrm{rd}}^{mn} r_{\mathrm{sr}}^{nm} \nonumber\\
& \quad - \sum_{n=1}^N \big(r_{\mathrm{rd}}^{nn}\big)^2 \big(r_{\mathrm{sr}}^{nn} \big)^2, \label{eq:eta}
\end{align}
where $(b)$ is obtained by computing the expectation in $(a)$ via considering all the values of $\theta_m, \theta_n, \theta_{m'},$ and $\theta_{n'}$ such that $\theta_{n} - \theta_{m} + \theta_{n'} - \theta_{m'} = 0$. The result in \eqref{eq:Z} is obtained after some algebraic manipulations on the last equation of \eqref{eq:eta}. Moreover, $\delta= \mathbb{E} \left\{ \mathrm{tr} \Big( \big( \mathbf{R}_{\mathrm{rd}} \pmb{\Theta}^H \mathbf{R}_{\mathrm{sr}} \pmb{\Theta} \big)^2 \Big) \right\}$ is obtained similarly as
\begin{align} 
\delta =& \sum_{n=1}^N \sum_{m=1}^N \sum_{n'=1}^N \sum_{m'=1}^N r_{\mathrm{rd}}^{nn'} r_{\mathrm{sr}}^{n'm} r_{\mathrm{rd}}^{mm'} r_{\mathrm{sr}}^{m'n} \mathbb{E} \big\{ e^{j(\theta_m - \theta_{n'} + \theta_n - \theta_{m'} )} \big\}\nonumber\\
=& \sum_{n=1}^N \sum_{m=1}^N r_{\mathrm{rd}}^{nm} r_{\mathrm{sr}}^{mm} r_{\mathrm{rd}}^{mn} r_{\mathrm{sr}}^{nn} + \sum_{n=1}^N \sum_{m=1}^N r_{\mathrm{rd}}^{nn} r_{\mathrm{sr}}^{nm} r_{\mathrm{rd}}^{mm} r_{\mathrm{sr}}^{mn} \nonumber\\
& - \sum_{n=1}^N \big(r_{\mathrm{rd}}^{nn}\big)^2 \big(r_{\mathrm{sr}}^{nn} \big)^2.\label{eq:deltav1}
\end{align}
By inserting \eqref{eq:nuval}--\eqref{eq:Q} into \eqref{eq:VarX}, we obtain 
\begin{align} \label{eq:VarXv1}
\mathsf{Var}\left\{ X \right\} = \beta_{\mathrm{sd}}^2 + 2 \beta_{\mathrm{sd}} \nu + \eta + 2\delta.
\end{align}
Plugging \eqref{eq:EXv1} and \eqref{eq:VarXv1} into the expressions of the shape and scale parameters, we obtain the result.
\vspace{-0.2cm}
\end{proof}
Unlike the deterministic setup in Corollary~\ref{Corollary1}, the result exhibited by Corollary~\ref{Corollary2} 
presents the outage probability when the phase shifts follow the common uniform distribution. Notably, the same methodology 
can be apparently extended to other phase shifts distributions with i.i.d. elements.
\begin{figure*}[t]
\begin{minipage}{0.33\textwidth}
	\centering
	\includegraphics[trim=3.5cm 9.7cm 3.8cm 10.4cm, clip=true, width=2.3in]{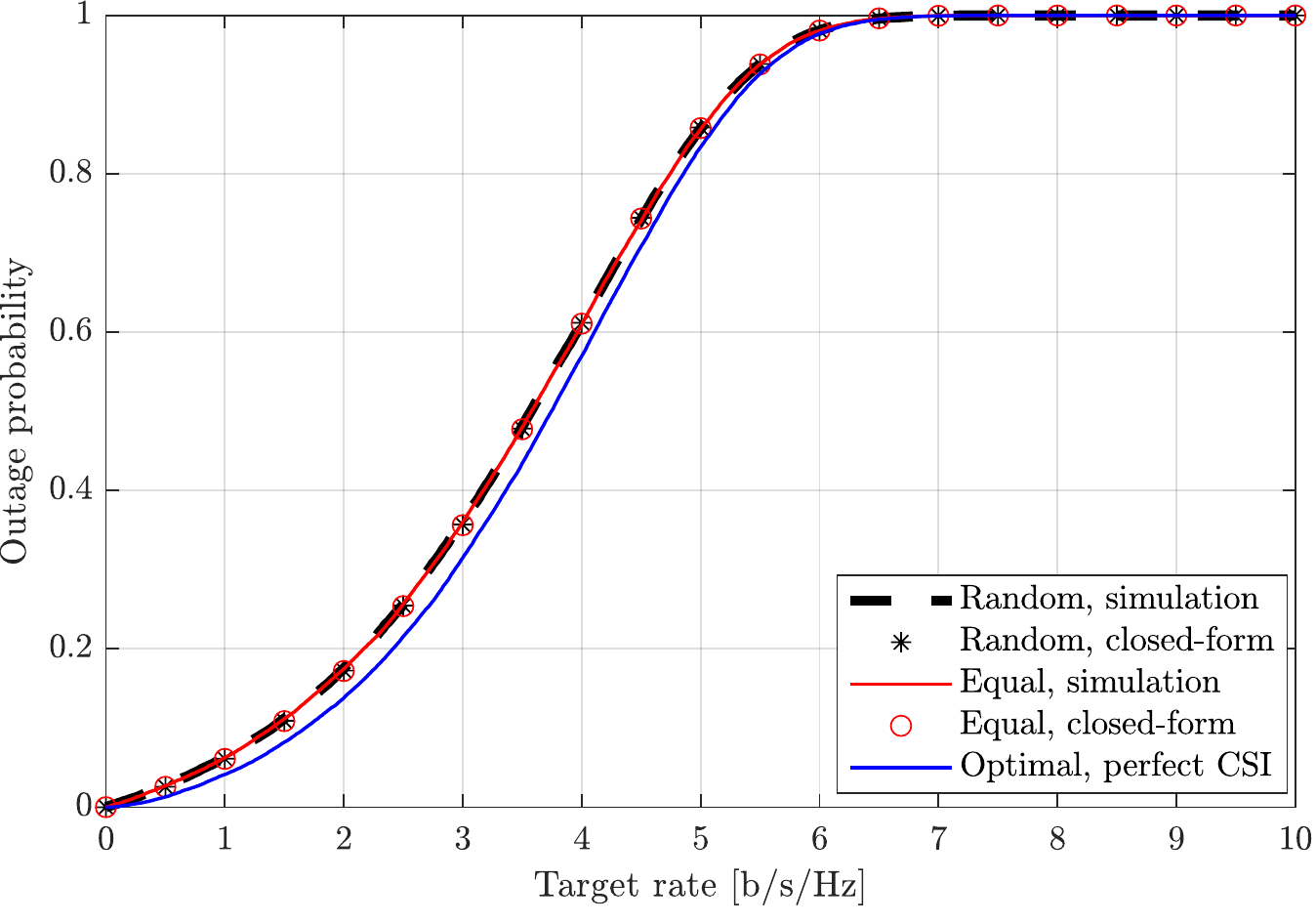} \vspace*{-0.2cm}
	\\ \small $(a)$
	\vspace*{-0.2cm}
\end{minipage}
\begin{minipage}{0.33\textwidth}
	\centering
	\includegraphics[trim=3.5cm 9.7cm 3.8cm 10.4cm, clip=true, width=2.3in]{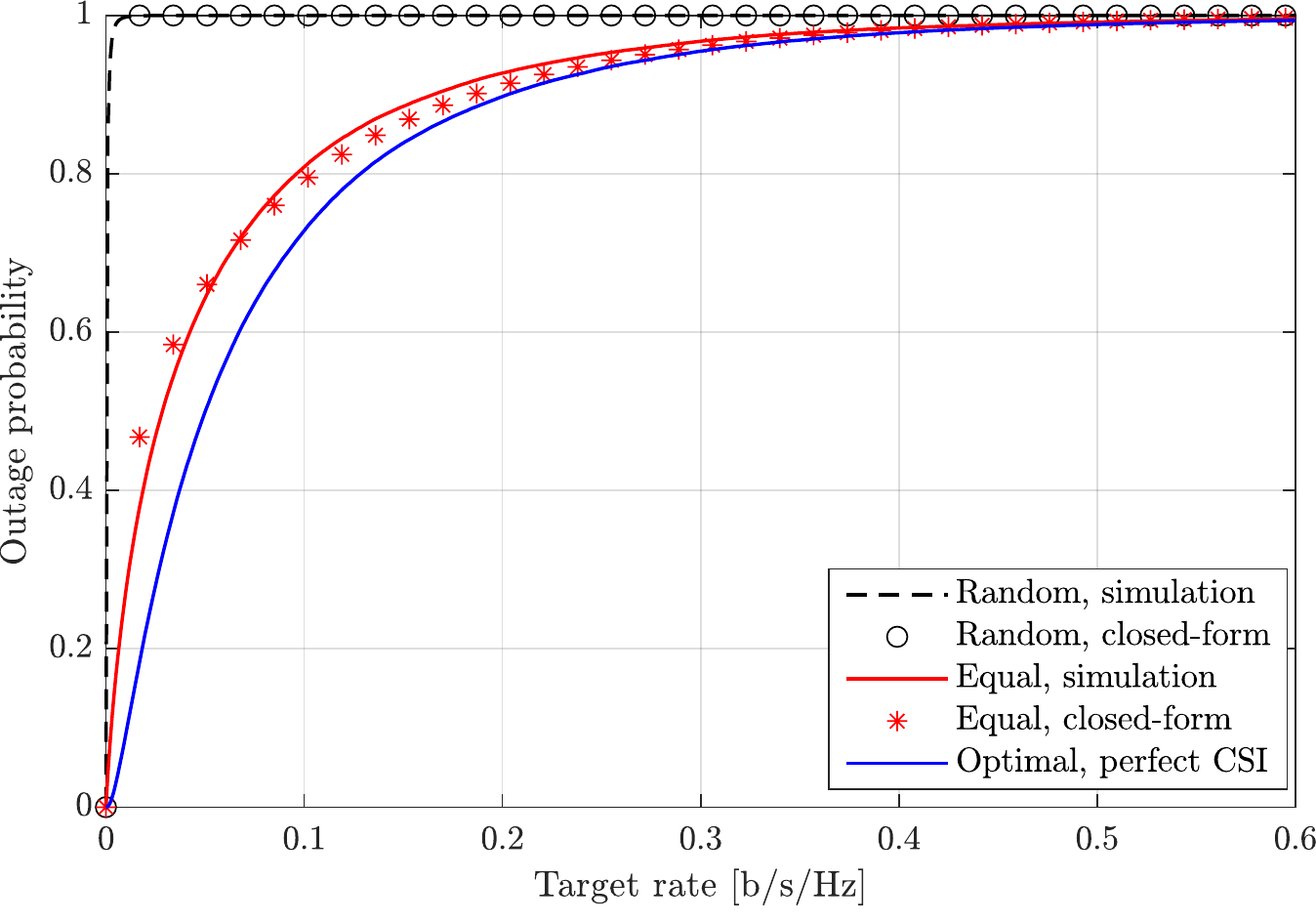} \vspace*{-0.2cm}
	\\ \small $(b)$
	\vspace*{-0.2cm}
\end{minipage}
\begin{minipage}{0.33\textwidth}
	\centering
	\includegraphics[trim=3.5cm 9.7cm 3.8cm 10.4cm, clip=true, width=2.3in]{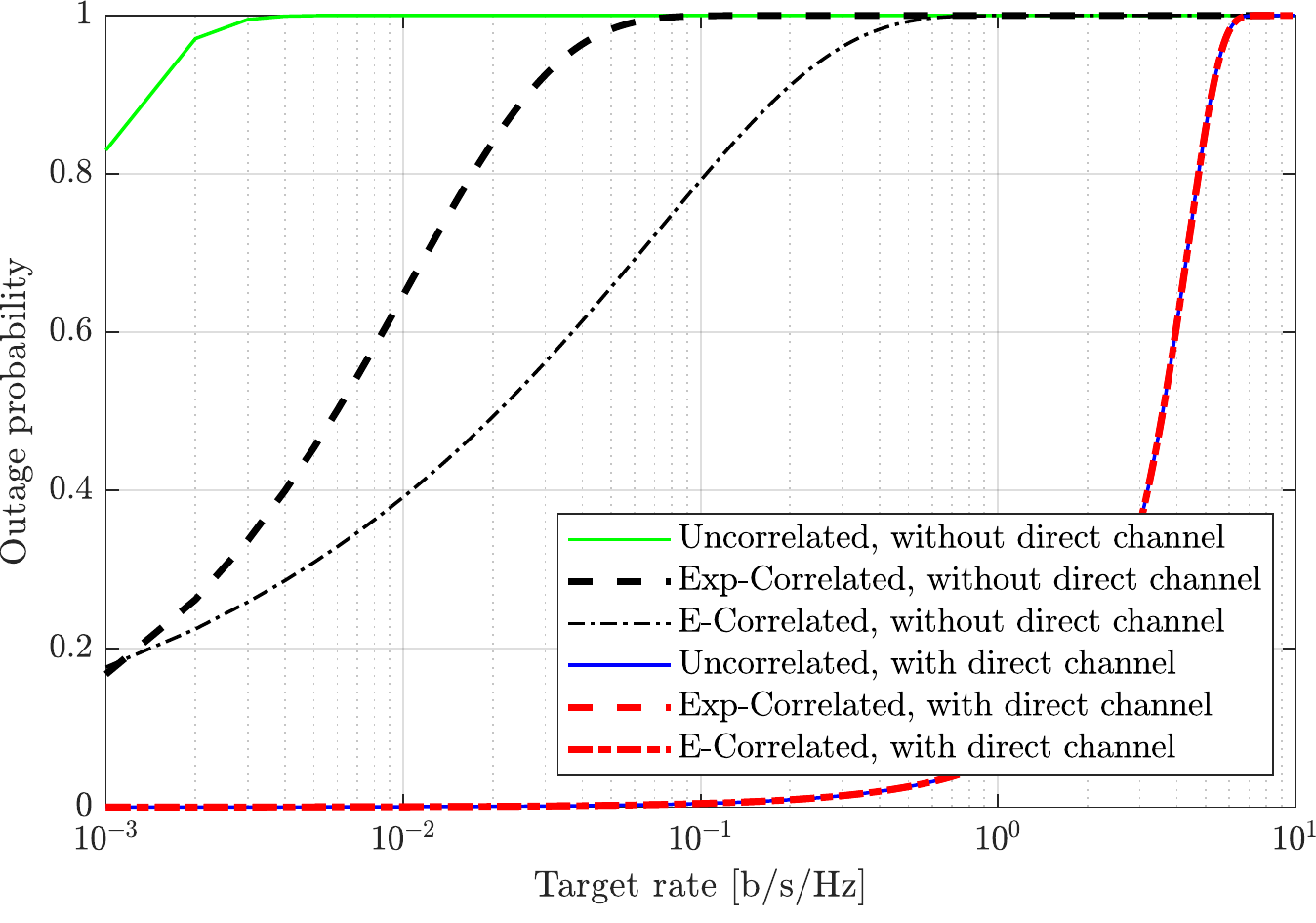}\vspace*{-0.2cm}\\ \small $(c)$
	\vspace*{-0.2cm}
\end{minipage}
\caption{The outage probability v.s. the target rate  in different scenarios: $(a)$ both channels are present (E-Correlated); $(b)$ only the indirect channel  (E-Correlated); and $(c)$ the different covariance matrices (Uncorrelated, Exp-Correlated, E-Correlated).}
\label{Fig2}
\vspace{-0.65cm}
\end{figure*}
\vspace*{-0.2cm}
\section{Numerical Results} \label{Sec:NumericalResults}
\vspace*{-0.2cm}
We consider a network setting, similar as in \cite{bjornson2020rayleigh} with $\beta_{\mathrm{sd}} = -90$~dB, $\beta_{\mathrm{sr}}d_H d_V = -84$~dB, and $\beta_{\mathrm{rd}}d_H d_V = -75$~dB. Also, the number of IRS elements is $N=196$. The carrier frequency is $3$~Ghz and the system bandwith is $10$~Mhz; the noise variance is $-94$~dBm with the noise figure being~$10$~dB. The transmitter power is $8$~dBm. In Fig.~\ref{Fig2}$a$  and Fig.~\ref{Fig2}$b$, the covariance matrices are defined as in \eqref{eq:CovE} with $d_H = d_V = \lambda/40$, denoted as E-Correlated. Fig.~\ref{Fig2}$c$ compares the outage probability with the different covariance matrices comprising the E-Correlated, spatially uncorrelated fading (denoted as Uncorrelated), and exponential correlation with correlation magnitude $0.95$ (denoted as Exp-Correlated) \cite{Nadeem2020}.

Figure~\ref{Fig2}$a$ depicts the outage probability with the existence of both direct and indirect channels. All the analytical results match well with Monte-Carlo simulations. 
 Interestingly, the outage probability varies slightly with the different selections of the phase shifts due to the dominance of the direct channel.  Both the uniformly random and equal phase shifts have small impacts on the outage probability. Hence, the IRS improves coverage marginally under these conditions. In addition, the optimal phase-shift design by exploiting the entire information on the perfect CSI \cite{van2020coverage,wu2019intelligent} has a small gain compared to the suboptimal in Corollary~\ref{Corollary0} based on the channel statistics only.

Figure~\ref{Fig2}$b$ shows the outage probability when the direct channel is blocked. The system is very sensitive to phase shift selection. The outage probability becomes quite high when the phase shifts are uniformly distributed since the random phase shifts can produce either constructive or destructive combinations of the received signals. The gap between the equal and optimal phase shifts is large, so an IRS is quite advantageous by phase optimization.

Figure~\ref{Fig2}$c$ illustrates the impact of correlated fading on the outage probability with $\theta_n=\pi/4, \forall n$. When a direct channel exists, the channel correlation does not affect significantly the outage probability because the direct channel has a strong effect and does not allow the cascaded channel to exhibit its contribution. However, in the case of direct signal blockage, the correlation among the IRS elements clearly presents a severe impact on the outage probability. In particular, E-Correlated yields  lower outage probability (better performance) than Exp-Correlated for the parameter settings.

\vspace{-0.4cm}
\section{Conclusion}
\vspace{-0.1cm}
We derived the outage probability of IRS-supported SISO communication systems under spatially correlated Rayleigh fading. Actually, for a given phase-shift matrix, we achieved to obtain its expression in closed-form by means of its channel statistics. Hence, we shed light on the impact of IRS correlation and its reflect beamforming matrix. The tightness of the analytical results was verified by Monte-Carlo simulations. The outage probability is more sensitive to the phase-shift matrix when the direct channel is blocked while the correlation has a significant impact under the same conditions.
\vspace{-0.3cm}
\bibliographystyle{IEEEtran}
\bibliography{IEEEabrv,refs}

\begin{thebibliography}{10}
\providecommand{\url}[1]{#1}
\csname url@samestyle\endcsname
\providecommand{\newblock}{\relax}
\providecommand{\bibinfo}[2]{#2}
\providecommand{\BIBentrySTDinterwordspacing}{\spaceskip=0pt\relax}
\providecommand{\BIBentryALTinterwordstretchfactor}{4}
\providecommand{\BIBentryALTinterwordspacing}{\spaceskip=\fontdimen2\font plus
\BIBentryALTinterwordstretchfactor\fontdimen3\font minus
  \fontdimen4\font\relax}
\providecommand{\BIBforeignlanguage}[2]{{%
\expandafter\ifx\csname l@#1\endcsname\relax
\typeout{** WARNING: IEEEtran.bst: No hyphenation pattern has been}%
\typeout{** loaded for the language `#1'. Using the pattern for}%
\typeout{** the default language instead.}%
\else
\language=\csname l@#1\endcsname
\fi
#2}}
\providecommand{\BIBdecl}{\relax}
\BIBdecl

\bibitem{basar2019wireless}
E.~Basar, M.~Di~Renzo, J.~De~Rosny, M.~Debbah, M.-S. Alouini, and R.~Zhang,
  ``Wireless communications through reconfigurable intelligent surfaces,''
  \emph{IEEE Access}, vol.~7, pp. 116\,753--116\,773, 2019.

\bibitem{wu2019intelligent}
Q.~Wu and R.~Zhang, ``Intelligent reflecting surface enhanced wireless network
  via joint active and passive beamforming,'' \emph{{IEEE} Trans. Wireless
  Commun.}, vol.~18, no.~11, pp. 5394--5409, 2019.

\bibitem{Pan2020}
C.~Pan, H.~Ren, K.~Wang, W.~Xu, M.~Elkashlan, A.~Nallanathan, and L.~Hanzo,
  ``Multicell {MIMO} communications relying on intelligent reflecting
  surfaces,'' \emph{IEEE Trans. Wireless Commun.}, 2020.

\bibitem{Guo2020}
H.~Guo, Y.-C. Liang, J.~Chen, and E.~G. Larsson, ``Weighted sum-rate
  maximization for reconfigurable intelligent surface aided wireless
  networks,'' \emph{IEEE Trans. Wireless Commun.}, vol.~19, no.~5, pp.
  3064--3076, 2020.

\bibitem{Nadeem2020}
Q.~{Nadeem}, H.~{Alwazani}, A.~{Kammoun}, A.~{Chaaban}, M.~{Debbah}, and
  M.~{Alouini}, ``Intelligent reflecting surface-assisted multi-user {MISO
  Communication: Channel} estimation and beamforming design,'' \emph{IEEE Open
  J. Commun. Soc.}, vol.~1, pp. 661--680, 2020.

\bibitem{han2019large}
Y.~Han, W.~Tang, S.~Jin, C.-K. Wen, and X.~Ma, ``Large intelligent
  surface-assisted wireless communication exploiting statistical {CSI},''
  \emph{IEEE Trans. Veh. Tech.}, vol.~68, no.~8, pp. 8238--8242, 2019.

\bibitem{bjornson2020rayleigh}
E.~Bj{\"o}rnson and L.~Sanguinetti, ``Rayleigh fading modeling and channel
  hardening for reconfigurable intelligent surfaces,'' \emph{{IEEE} Wireless
  Commun. Lett.}, vol.~10, no.~4, pp. 830--834, 2021.

\bibitem{guo2020outage}
C.~Guo, Y.~Cui, F.~Yang, and L.~Ding, ``Outage probability analysis and
  minimization in intelligent reflecting surface-assisted {MISO} systems,''
  \emph{{IEEE} Commun. Lett.}, 2020.

\bibitem{yang2020coverage}
L.~Yang, Y.~Yang, M.~O. Hasna, and M.-S. Alouini, ``Coverage, probability of
  {SNR} gain, and {DOR} analysis of {RIS}-{A}ided communication systems,''
  \emph{IEEE Wireless Commun. Lett.}, 2020.

\bibitem{van2020coverage}
T.~Van~Chien, L.~T. Tu, S.~Chatzinotas, and B.~Ottersten, ``Coverage
  probability and ergodic capacity of intelligent reflecting surface-enhanced
  communication systems,'' \emph{{IEEE} Commun. Lett.}, vol.~25, no.~1, pp.
  69--73, 2021.

\bibitem{di2020smart}
M.~Di~Renzo, A.~Zappone, M.~Debbah, M.-S. Alouini, C.~Yuen, J.~de~Rosny, and
  S.~Tretyakov, ``Smart radio environments empowered by reconfigurable
  intelligent surfaces: {H}ow it works, state of research, and the road
  ahead,'' \emph{{IEEE} J. Sel. Areas Commun.}, vol.~38, no.~11, pp.
  2450--2525, 2020.

\bibitem{patzold2003mobile}
M.~Patzold, \emph{Mobile fading channels}.\hskip 1em plus 0.5em minus
  0.4em\relax John Wiley \& Sons, Inc., 2003.

\bibitem{Chien2020book}
T.~V. Chien and H.~Q. Ngo, \emph{Massive {MIMO} Channels}.\hskip 1em plus 0.5em
  minus 0.4em\relax IET Publishers, 2020, ch.~11.

\bibitem{le2020robust}
T.~A. Le, T.~Van~Chien, and M.~Di~Renzo, ``Robust probabilistic-constrained
  optimization for {IRS}-aided {MISO} communication systems,'' \emph{{IEEE}
  Wireless Commun. Lett.}, 2020.

\bibitem{Chien2021TWC}
\BIBentryALTinterwordspacing
T.~V. Chien, H.~Q. Ngo, S.~Chatzinotas, M.~D. Renzo, and B.~Ottersten,
  ``Reconfigurable intelligent surface-assisted {C}ell-{F}ree {M}assive {MIMO}
  systems over spatially-correlated channels,'' \emph{{IEEE} Trans. Wireless
  Commun.}, 2021, submitted for publication. [Online]. Available:
  \url{https://arxiv.org/pdf/2104.08648.pdf}
\BIBentrySTDinterwordspacing

\bibitem{WMathematicaOnline}
\BIBentryALTinterwordspacing
W.~R. Inc., ``Wolfram mathematica document.'' [Online]. Available:
  \url{http://functions.wolfram.com/06.06.21.0002.01}
\BIBentrySTDinterwordspacing

\bibitem{zhi2020power}
K.~Zhi, C.~Pan, H.~Ren, and K.~Wang, ``Power scaling law analysis and phase
  shift optimization of {RIS}-aided massive {MIMO} systems with statistical
  {CSI},'' \emph{arXiv preprint arXiv:2010.13525}, 2020.

\end{thebibliography}
\end{document}